\newtheorem{theorem}{\bf Theorem}
\newenvironment{proof}[1][Proof]{\begin{trivlist}
\item[\hskip \labelsep {\bfseries #1}]}{\end{trivlist}}
\newcommand*{\rom}[1]{\expandafter\@slowromancap\romannumeral #1@}
\newcommand{\newtext}[1]{{\color{black}{#1}}}
\newcommand{\IntS}{\newtext{I_s}}
\begin{document}
\title{\textbf{Byzantine-Resilient Locally Optimum Detection Using Collaborative Autonomous Networks}\vspace{-0.4in}}
\author{\IEEEauthorblockN{Bhavya Kailkhura, Priyadip Ray, Deepak Rajan, Anton Yen, Peter Barnes, Ryan Goldhahn}
\IEEEauthorblockA{Lawrence Livermore National Laboratory}
\thanks{This work was performed under the auspices of the U.S. Department of Energy by Lawrence Livermore National Laboratory under Contract DE-AC52-07NA27344. LLNL-CONF-731964
}
}
\maketitle
\begin{abstract}
In this paper, we propose a locally optimum detection (LOD) scheme for detecting a weak radioactive source buried in background clutter.
We develop a decentralized algorithm, based on alternating direction method of multipliers (ADMM), for implementing the proposed scheme in autonomous sensor networks.
Results show that algorithm performance approaches the centralized clairvoyant detection algorithm in the low SNR regime, and 
exhibits excellent convergence rate and scaling behavior (w.r.t. number of nodes). We also devise a low-overhead, robust ADMM algorithm for Byzantine-resilient detection, and demonstrate its robustness to data falsification attacks.
\end{abstract}
\begin{keywords}
locally optimum detection, data falsification, Byzantines, autonomous networks, ADMM
\end{keywords}


\section{Introduction}
\label{intro}
Autonomous vehicles provide sensing platforms which are small, low cost, and maneuverable. Because of size, weight and power restrictions, the sensors onboard are of limited performance. Signal processing and data fusion techniques are thus needed to approach the performance of a more capable sensor with a large number of adaptively re-configurable low cost sensors.  This work provides a computationally tractable scheme for autonomous detection, applied to the problem of detecting a radioactive source.

Detection of radioactive sources using sensor networks has received significant attention in the literature. In~\cite{nd1}, the authors examine the gain in signal-to-noise ratio obtained by a simple combination of data from networked sensors compared to a single sensor for radioactive source detection. The costs and benefits of using a network of radiation detectors for radioactive source detection are analyzed and evaluated in~\cite{nd2}. In~\cite{ashok_fusion}, the authors derived a test for the fusion of correlated decisions and obtained optimal sensor thresholds for two sensor case. In~\cite{nd3}, the authors considered the problem of detecting a time-inhomogeneous Poisson process buried in the recorded background radiation using sensor networks. However, all these works assume existence of a centralized fusion center (FC) to fuse the data from multiple sensors and to make a global decision.

In many scenarios, a centralized FC may not be available. 
Furthermore, due to the distributed nature of future communication networks and various practical constraints (e.g., absence of the FC, transmit power or hardware constraints, and dynamic characteristic of wireless communications), it may be desirable to achieve collaborative decision making by employing peer-to-peer local information exchange to reach a global decision. 
Recently, collaborative autonomous detection based on consensus algorithms has been explored in~\cite{paper4,paper5,paper6,paper7,paper8,paper9}. 
However, all these approaches assume a clairvoyant detection where all the parameters of the detection system and signal model are completely known. Note that, for our application of interest (i.e., nuclear radiation detection) the location of the radiation sources is rarely known. Centralized approaches manage this challenge by employing composite hypothesis testing frameworks such as the generalized likelihood ratio test (GLRT). In GLRT, the detection procedure replaces unknown parameters in the detection algorithm with their maximum likelihood estimates, which need multiple sensing intervals for a reasonably accurate parameter estimate. This overhead and delay is not desirable in nuclear radiation detection problems, especially under weak signal models. Secondly, due to the non-linearity introduced by the estimation step in GLRT, a decentralized implementation of GLRT is non-trivial. 
Finally, the implementation of non-linear detectors on low cost UAVs is difficult in practice. Thus, a decentralized solution with a simple implementation for the radiation detection problem with unknown source location is of utmost interest.

Autonomous detection schemes are quite vulnerable to different types of attacks. One typical attack on such networks is a Byzantine (or data falsification) attack~\cite{frag, Rifa, Marano, Rawat, bhavyaj, Kailkhura, aditya}. 
Few attempts have been made to address the Byzantine attacks in conventional consensus-based detection schemes in recent research~\cite{bk_consensus,tang,yu1,yu2,liu,yan,paper12}.
There exist several methods for decentralized consensus optimization, including distributed subgradient descent algorithms~\cite{nedic}, dual averaging methods~\cite{duchi}, and the alternating direction method of multipliers (ADMM)~\cite{boyd_admm}. Among these, the ADMM has drawn significant attention, as it is well suited for distributed convex optimization and demonstrates fast convergence in many applications. However, the performance analysis of ADMM in the presence of data falsifying Byzantine attacks has thus far not been addressed in the literature.

To overcome the aforementioned challenges, in this paper we propose a simple to implement locally optimum detection algorithm to detect radioactive source signal buried in noise. We also devise a robust variant of ADMM algorithm to implement this detection scheme in autonomous networks in the presence of Byzantine attacks. To the best of our knowledge, there have been no existing results on the Byzantine-resilient locally optimum detection in collaborative autonomous sensor networks.

\section{System Model}


\subsection{Signal Model}
\label{sigmodel}
Consider two hypotheses $H_0$ (radioactive source is absent) and $H_1$ (radioactive source is present). Also, consider a network of $N$ autonomous {nodes which must determine which} of the two
hypotheses is true. The observations received by
the node $i$ for $i=1,\cdots,N$ under both hypotheses are as follows.
\begin{eqnarray}
H_0 &:& z_i = b_i+w_i\nonumber \\
H_1 &:& z_i = c_i+b_i+w_i 
\end{eqnarray}
where \newtext{$b_i$,} $c_i$ and $w_i$ are the background radiation count,
source radiation count and measurement noise respectively, at
node $i$ located at $\{X_i, Y_i\}$\footnote{Note that, the proposed scheme can easily be extended to a three-dimensional setting}. The
background radiation count is assumed to be Poisson distributed with known rate parameter $\lambda_b$. The source radiation count at node $i$ is assumed to be Poisson distributed with rate parameter $\lambda_{ci}$. We assume
an isotropic behavior of radiation in the presence of the \newtext{source;
the} rate $\lambda_{ci}$ is a function of the source intensity \newtext{$\IntS$ and} distance of the $i$th sensor from the \newtext{source, given} by
\begin{equation}
\lambda_{ci} = \dfrac{\IntS}{(X_i-X_s)^2+(Y_i-Y_s)^2},
\end{equation}
\noindent where $\{X_s, Y_s\}$ represent the source coordinates. The measurement
noise $w_i$ is Gaussian distributed with a known variance
$\sigma_w^2$. The background radiation count $b_i$ and measurement
noise $w_i$ are assumed to be independent. We also assume \newtext{that the  observations at any node are} conditionally independent and identically
distributed given the hypothesis. It is well known that the above signal model can be approximated by the Gaussian distribution~\cite{ashok_fusion}. Thus, under $H_0$, we have
\newtext{$$f_0(z_i) = \mathcal{N}(\lambda_b, \lambda_b+\sigma_w^2).$$}
Similarly, \newtext{under the $H_1$} hypothesis,
\newtext{$$f_1(z_i) = \mathcal{N}(\lambda_{ci}+\lambda_b, \lambda_{ci}+\lambda_b+\sigma_w^2),$$}
where $\lambda_{ci}$ is a function of node $i$'s position relative to 
source.

\subsection{Collaborative Autonomous Detection: Clairvoyant Case}
For ease of exposition, we first consider the clairvoyant case, i.e., the values of source intensity $\IntS$ and source coordinates $\{x_s, y_s\}$
are assumed to be known. In our setting, however, the source location is unknown, which is addressed in detail in subsequent sections.
The collaborative autonomous detection scheme usually contains three phases: $1)$ sensing, $2)$ collaboration, and $3)$ decision making. 
In the sensing phase, each node acquires the summary statistic about the phenomenon of interest. Next, in the collaboration phase, each node communicates with its neighbors to update/improve their state values (summary statistic) and continues with this process until the whole network converges to a steady state which is the global test statistic.
Finally, in the decision making phase, nodes make their own decisions about the presence of the phenomenon using this global test statistic. 

The clairvoyant detector
is easy to implement in a decentrlized setup using a consensus based approaches~\cite{paper2} and is the log likelihood ratio test (LRT) given by:
$$\sum\limits_{i=1}^{N}\log\left(\dfrac{f_1(z_i)}{f_0(z_i)}\right) \quad \mathop{\stackrel{H_1}{\gtrless}}_{H_0} \quad  \log\lambda,$$
where $\lambda$ is chosen such \newtext{that} the probability of false alarm is constrained below a pre-specified level $\delta$.

\subsection{Detection with Unknown Source Location: GLRT}
In many practical scenarios, \newtext{including the focus of this work,} the location of the radioactive source is not known and the LRT cannot be implemented. In such scenarios, one of the most popular
\newtext{tests} is the Generalized Likelihood Ratio Testing (GLRT). The GLRT has an estimation procedure built into it, where the underlying parameter estimates are used as a plug-in estimate for the test statistic. More specifically, the GLRT test statistic is as follows:
\begin{equation}
\max_{\lambda_{ci}} \sum\limits_{i=1}^{N}\log\left(\dfrac{f_1(z_i;\lambda_{ci})}{f_0(z_i)}\right) \quad \mathop{\stackrel{H_1}{\gtrless}}_{H_0} \quad  \log\lambda.
\end{equation}
\newtext{Next,} we show that in the low signal to noise ratio (SNR) regime, there exist a locally optimum detection scheme which alleviates the difficulties (e.g., delay, overhead and non-linearity) in implementing GLRT in an autonomous setting. 

\section{Collaborative Autonomous Locally Optimum Detection (CA-LOD)}
For \newtext{ease} of exposition, we first \newtext{derive the new locally optimum detection scheme for a centralized scenario.} Then, we \newtext{present}
an approach to implement the proposed detection scheme in a decentralized \newtext{setting.} 

\subsection{Locally Optimum Centralized Detection}
\begin{theorem}
The locally optimal test statistic is \newtext{given} by
\begin{equation}
\label{lodstat}
\sum\limits_{i=1}^{N} (z_i-\lambda_b)+\sum\limits_{i=1}^{N} \dfrac{(z_i-\lambda_b)^2}{2(\lambda_b+\sigma_w^2)}\quad \mathop{\stackrel{H_1}{\gtrless}}_{H_0} \quad  \gamma,
\end{equation}
where $\gamma$ is chosen such that the probability of false alarm is constrained below a pre-specified level $\delta$.
\end{theorem}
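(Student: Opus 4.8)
The plan is to treat this as a locally most powerful (LMP) detection problem for a weak signal. I would introduce a single scalar ``signal strength'' parameter $\theta \ge 0$ that captures the source count rate, writing the density under the alternative as a one-parameter family $f_1(z_i;\theta)$ with $f_1(z_i;0)=f_0(z_i)$. Because the source location is unknown, the per-node rates $\lambda_{ci}$ cannot be resolved individually; the natural device is to collapse them into this common weak parameter, so that $z_i\sim\mathcal{N}(\theta+\lambda_b,\;\theta+\lambda_b+\sigma_w^2)$ under $H_1$. With this parametrization, detecting a weak source becomes testing $\theta=0$ against small $\theta>0$, and the resulting test will automatically be location-independent.

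Next I would invoke the standard LMP / generalized Neyman--Pearson argument: among all tests meeting the false-alarm constraint $P_{FA}\le\delta$, the one maximizing the slope $\tfrac{d}{d\theta}P_D(\theta)\big|_{\theta=0}$ thresholds the normalized first derivative of the likelihood. Since $f_1(z;\theta)=f_0(z)$ at $\theta=0$, this normalized derivative equals $\tfrac{\partial}{\partial\theta}\log f_1(z;\theta)\big|_{\theta=0}$. By the assumed conditional independence across nodes the log-likelihood separates into a sum, so the locally optimum statistic is $\sum_{i=1}^N \tfrac{\partial}{\partial\theta}\log f_1(z_i;\theta)\big|_{\theta=0}$.

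The computational core is then differentiating a Gaussian log-density in which both the mean $\mu_i(\theta)=\theta+\lambda_b$ and the variance $v_i(\theta)=\theta+\lambda_b+\sigma_w^2$ depend on $\theta$ (each with unit derivative). I would differentiate $-\tfrac12\log(2\pi v_i)-\tfrac{(z_i-\mu_i)^2}{2v_i}$, retaining the $v_i'$ contribution from the log term together with the two contributions from the exponent, and then set $\theta=0$ so that $\mu_i=\lambda_b$ and $v_i=\lambda_b+\sigma_w^2$. This produces an additive constant, a linear term proportional to $\sum_i(z_i-\lambda_b)$, and a quadratic term proportional to $\sum_i(z_i-\lambda_b)^2/(\lambda_b+\sigma_w^2)^2$. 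Finally I would absorb the additive constant and the common positive factor $1/(\lambda_b+\sigma_w^2)$ into the threshold $\gamma$ --- legitimate because monotone affine transformations preserve the test --- which yields exactly \eqref{lodstat}.

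I expect the main obstacle to be conceptual rather than computational: justifying the collapse of the location-dependent rates $\lambda_{ci}$ into a single scalar $\theta$, and arguing that the first derivative (rather than a higher one) is the correct object. It is, precisely because the $\theta$-dependence of the variance keeps the first derivative generically nonzero, and it is this variance sensitivity that generates the distinctive quadratic term $\sum_i(z_i-\lambda_b)^2$ that would be absent in a pure mean-shift model. I would also verify the usual regularity conditions (interchange of differentiation and integration for the Gaussian family) so that the slope-maximization characterization of the LMP test is valid.
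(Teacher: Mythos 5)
Your proposal is correct and follows essentially the same route as the paper: both reduce the problem to computing $\tfrac{d}{d\theta}\log f_1(z_i;\theta)\big|_{\theta=0}$ for the Gaussian family whose mean and variance each depend linearly on the weak-signal parameter, which is exactly where the quadratic term $\sum_i (z_i-\lambda_b)^2/(2(\lambda_b+\sigma_w^2))$ comes from, with constants absorbed into the threshold. Your explicit LMP (slope-of-power-function) justification and your careful collapse of the node-dependent rates $\lambda_{ci}$ into a single scalar $\theta$ are in fact cleaner than the paper's presentation, which pulls $\lambda_{ci}$ outside the sum over $i$ and then silently drops it, but the computational content is identical.
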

\begin{proof}
The LRT for known $\lambda_{ci}$ is given by
\begin{small}
\begin{eqnarray}
&&\sum\limits_{i=1}^{N}\log\left(\dfrac{f_1(z_i;\lambda_{ci})}{f_0(z_i)}\right) \quad \mathop{\stackrel{H_1}{\gtrless}}_{H_0} \quad  \log\lambda\\
&\Leftrightarrow& \sum\limits_{i=1}^{N}\log f_1(z_i;\lambda_{ci}) - \sum\limits_{i=1}^{N}\log f_0(z_i)\quad \mathop{\stackrel{H_1}{\gtrless}}_{H_0} \quad  \log\lambda \label{ts}
\end{eqnarray}
\end{small}
However, since we are considering a weak signal scenario, $\lambda_{ci}$ tends to zero, and hence linearizing the LRT around \newtext{$\lambda_{ci} = 0$ results} in,
\begin{small}
\begin{eqnarray*}
&&\sum\limits_{i=1}^{N}(\lambda_{ci}-0)\dfrac{d}{d\lambda_{ci}} \log f_1(z_i;\lambda_{ci})|_{\lambda_{ci}=0} \quad \mathop{\stackrel{H_1}{\gtrless}}_{H_0} \quad  \log\lambda\\
&\Leftrightarrow& \lambda_{ci}\sum\limits_{i=1}^{N}\dfrac{d}{d\lambda_{ci}} \Bigl(-\frac{1}{2}\log (2\pi(\lambda_{ci}+\lambda_b+\sigma_w^2))\\
&&\qquad\qquad-\dfrac{(z_i-\lambda_{ci}-\lambda_b)^2}{2(\lambda_{ci}+\lambda_b+\sigma_w^2)}\Bigr)|_{\lambda_{ci}=0} \quad \mathop{\stackrel{H_1}{\gtrless}}_{H_0} \quad  \log\lambda\\
&\Leftrightarrow& \sum\limits_{i=1}^{N} (z_i-\lambda_b)+\sum\limits_{i=1}^{N} \dfrac{(z_i-\lambda_b)^2}{2(\lambda_b+\sigma_w^2)}\quad \mathop{\stackrel{H_1}{\gtrless}}_{H_0} \quad  (\lambda_b+\sigma_w^2)\log \lambda +\dfrac{N}{2}.
\end{eqnarray*}
\end{small}
\end{proof}

The resulting test statistic is independent of the unknown parameter $\lambda_{ci}$, and is the \textit{uniformly most powerful} (UMP) test for weak signals. 
\begin{figure*}[t]
\centering
\subfigure[]{
\includegraphics[%
  width=0.23\textwidth,clip=true]{./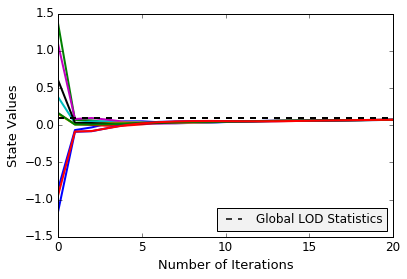}
\label{conv} }
\subfigure[]{
\includegraphics[%
  width=0.23\textwidth,clip=true]{./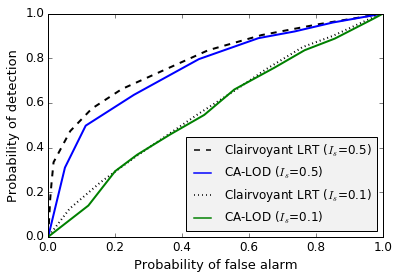}
\label{lod_p}}
\hfill
\subfigure[]{
\includegraphics[%
  width=0.23\textwidth,clip=true]{./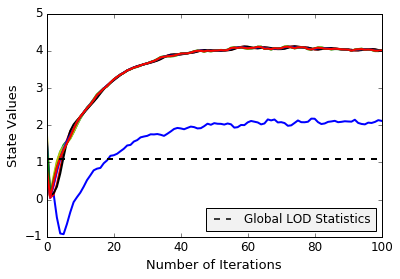}
\label{admm_conv} }
\subfigure[]{
\includegraphics[%
  width=0.23\textwidth,clip=true]{./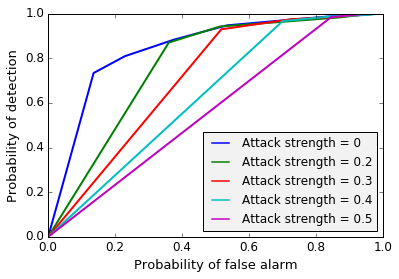}
\label{roc_b}}
\caption{\subref{conv} Convergence of state values of a network with $10$ nodes using ADMM based CA-LOD scheme. \subref{lod_p} Performance comparison of CA-LOD with clairvoyant LRT. \subref{admm_conv} Convergence of vanilla ADMM based CA-LOD in the presence of Byzantines. Blue curve represents Byzantine's state values. \subref{roc_b} Susceptibility of CA-LOD to Byzantine attack in terms of ROC.}
\label{perf1}
\vspace{-0.1in}
\end{figure*}
\subsection{Collaborative Autonomous Detection Using ADMM}
The LOD test statistic derived in the previous section is of the form below:
\begin{equation*}
\dfrac{1}{N}\sum \limits_{i=1}^N f(z_i)\quad \mathop{\stackrel{H_1}{\gtrless}}_{H_0} \quad  \dfrac{\gamma}{N}
\end{equation*}
where $f(z_i) = (z_i-\lambda_b)+\dfrac{(z_i-\lambda_b)^2}{2(\lambda_b+\sigma_w^2)}$. The LOD statistic is separable and the function $f(z_i)$ is strongly convex. Next, we show that the LOD statistic can be implemented in a distributed manner using ADMM. To apply ADMM, we first formulate a convex optimization problem
\begin{equation}
\label{eq1}
x^* = \arg\min_{\hat{x}} \sum\limits_{i=1}^N \dfrac{(\hat{x}-f(z_i))^2}{2}
\end{equation}
where the data average is the solution to a least-squares minimization problem. Next, we reformulate \eqref{eq1} in the ADMM amenable form as below
\begin{eqnarray}
&&\text{minimize}_{\{x_i\},\{y_{ij}\}} \quad \sum\limits_{i=1}^N \dfrac{(x_i-f(z_i))^2}{2}\\
&&\text{subject to} \qquad x_i=y_{ij}, x_j=y_{ij}, \forall(i,j)\in \mathcal{A}
\end{eqnarray}
where $\mathcal{A}$ is the adjacency matrix, $x_i$ is the local copy of the common optimization
variable $\hat{x}$ at node $i$ and $y_{ij}$ is an auxiliary variable imposing
the consensus constraint on neighboring nodes $i$ and $j$. In the matrix form, let us denote $F(\mathbf{x})=\frac{1}{2}\|\mathbf{x}-f(\mathbf{z})\|_2^2$, then, the optimization problem is 
\begin{small}
\begin{eqnarray}
&&\text{minimize}_{\mathbf{x,y}} \quad F(\mathbf{x})+G(\mathbf{y})\nonumber \\
&&\text{subject to} \qquad \mathbf{A}\mathbf{x} + \mathbf{B}\mathbf{y} = \mathbf{0}\label{admm-eq1}
\end{eqnarray}
\end{small}
where $G(\mathbf{y})=0$. Here $\mathbf{B} = [-\mathbf{I}_{|\mathcal{A}|};-\mathbf{I}_{|\mathcal{A}|}]$ and $\mathbf{A} = [\mathbf{A}_1;\mathbf{A}_2]$ with $\mathbf{A}_k\in \mathbb{R}^{2E\times N}$. If $(i,j)\in \mathcal{A}$ and $y_{ij}$ is the $q$th entry
of $\mathbf{y}$, then the $(q, i)$th entry of $\mathbf{A}_1$ and the $(q, j)$th entry of
$\mathbf{A}_2$ are $1$; otherwise the corresponding entries are $0$.
The augmented Lagrangian of~\eqref{admm-eq1} is given by
$$L_{\rho}(\mathbf{x},\mathbf{y},\mathbf{\lambda})=F(\mathbf{x})+\langle \mathbf{\lambda}, \mathbf{A}\mathbf{x} + \mathbf{B}\mathbf{y}\rangle+\dfrac{\rho}{2}\|\mathbf{A}\mathbf{x} + \mathbf{B}\mathbf{y}\|_2^2,$$
where $\mathbf{\lambda}=[\mathbf{\beta}_1;\mathbf{\beta}_2]$ with $\mathbf{\beta}_1,\mathbf{\beta}_2\in \mathbb{R}^{2E}$ is the Lagrange multiplier and $\rho$ is a positive algorithm parameter. The updates for ADMM are
\begin{small}
\begin{eqnarray}
&& \mathbf{x}\text{-update}: \nabla F(\mathbf{x}^{k+1})+\mathbf{A}^T\mathbf{\lambda}^k +\rho \mathbf{A}^T(\mathbf{A}\mathbf{x}^{k+1} + \mathbf{B}\mathbf{y}^{k}) = \mathbf{0}, \nonumber \\
&& \mathbf{y}\text{-update}: \mathbf{B}^T\mathbf{\lambda}^k +\rho \mathbf{B}^T(\mathbf{A}\mathbf{x}^{k+1} + \mathbf{B}\mathbf{y}^{k+1}) = \mathbf{0}, \nonumber\\
&& \mathbf{\lambda}\text{-update}: \mathbf{\lambda}^{k+1}-\mathbf{\lambda}^k -\rho (\mathbf{A}\mathbf{x}^{k+1} + \mathbf{B}\mathbf{y}^{k+1}) = \mathbf{0},\label{admm-update}
\end{eqnarray}
\end{small}
where $\nabla F(\mathbf{x}^{k+1}) = \mathbf{x}^{k+1}-f(\mathbf{z})$ is the gradient of $F(.)$ at $\mathbf{x}^{k+1}$. The global convergence of ADMM was established in~\cite{boyd_admm}. Since our objective function $F(\mathbf{x})$ is strongly convex
in $\mathbf{x}$, we obtain $x^*$ equal to the global test statistic as given in~\eqref{lodstat} as the unique solution. 

The updates in~\eqref{admm-update} can be further simplified to~\cite{shi2014linear},
\begin{small}
\begin{eqnarray}
&& x_i^{k+1} = \dfrac{1}{1+2\rho |\mathcal{N}_i|}\left(\rho |\mathcal{N}_i|x_i^{k} +\rho \sum_{j \in \mathcal{N}_i}x_j^{k}-\alpha_i^k+f(z_i)\right),\nonumber \\
&& \alpha_i^{k+1} = \alpha_i^k +\rho \left(|\mathcal{N}_i|x_i^{k+1}-\sum_{j \in \mathcal{N}_i}x_j^{k+1}\right)\label{admm-updates}
\end{eqnarray}
\end{small}
at node $i$ where $\mathcal{N}_i$ denotes the set of neighbors of node $i$. Note that, the updates in~\eqref{admm-updates} only depend on the data from the neighbors of the node $i$ and can be implemented in a fully autonomous manner. This implies that with these updates, each node can learn the global LOD test statistic only by local information exchanges. 

Next, to gain insight into the solution, we present illustrative examples that corroborate our
results. We consider a $10$ node network employing the ADMM updates as given in~\eqref{admm-updates} to determine the presence (or absence) of a radioactive source. Source and node locations and adjacency matrix were generated randomly in a region of interest of dimension $3.0\times 3.0$ units. The ADMM parameter $\rho$ was set to $1.0$. We assume a mean background radiation with count $\lambda_b= 0.5$ and measurement noise with $\sigma_w^2 = 0.5$. We further assume that  the prior probability of hypothesis is $P_0=P_1= 0.5$ and detection performance is empirically found by performing $1000$ Monte-Carlo runs. 

\subsubsection{Convergence Analysis}
To better understand the convergence properties of the proposed approach, we next present an instance of ADMM based CA-LOD in Fig.~\ref{conv}. We assume that each node starts with its local LOD statistic and collaborate with its neighbors to improve its performance. We plot the updated state values (LOD statistic) at each node as a function of information exchange iterations. 
Fig.~\ref{conv} shows the state values of each node as a function of the number of  iterations. We see that the state values converges to the global statistic within $20$ iterations using local interactions.


\subsubsection{Detection Performance Analysis}
Next, we analyze the detection performance of the proposed scheme.
In Fig.~\ref{lod_p}, we plot steady state receiver operating characteristic (ROC) curves for the proposed CA-LOD approach for different source intensities $\IntS$. We compare the performance of the proposed approach with clairvoyant LRT based approach which has knowledge of the true source location. For both $\IntS=0.1$ and $\IntS=0.5$, the proposed CA-LOD approach performs almost as good as the clairvoyant LRT based approach.

%
%

\begin{figure*}[t]
\centering
\subfigure[]{
\includegraphics[%
  width=0.23\textwidth,clip=true]{./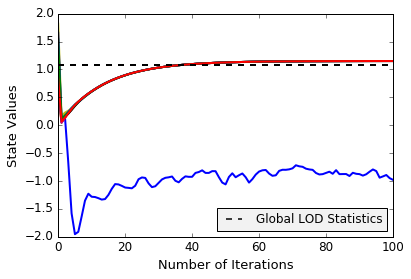}
\label{radmm_conv} }
\subfigure[]{
\includegraphics[%
  width=0.23\textwidth,clip=true]{./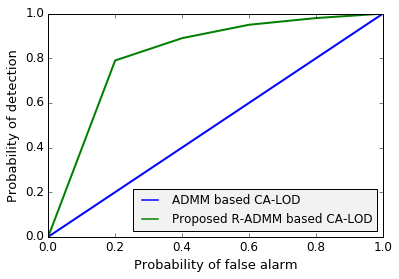}
\label{roc_comp}}
\subfigure[]{
\includegraphics[%
  width=0.23\textwidth,clip=true]{./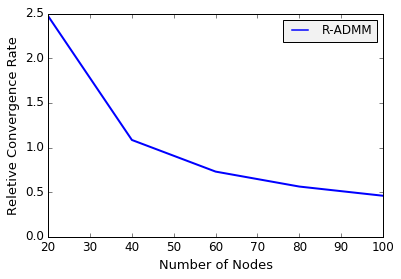}
\label{scaling} }
\subfigure[]{
\includegraphics[%
  width=0.23\textwidth,clip=true]{./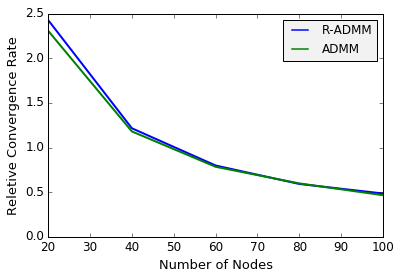}
\label{overhead}}
\caption{\subref{radmm_conv} Convergence of proposed algorithm based CA-LOD in the presence of Byzantine attack. Blue curve represents Byzantines state values. \subref{roc_comp} Detection performance of CA-LOD in the presence of Byzantine attacks. \subref{scaling} Scaling behavior of the proposed algorithm for bounded neighborhood size. \subref{overhead} Overhead comparison in the absence of Byzantine attacks.}
\label{perf2}
\vspace{-0.1in}
\end{figure*}

\section{Collaborative Autonomous Detection in the Presence of Byzantine Attacks}

\subsection{Byzantine Attack Model: Modus Operandi}
Note that, the ADMM update at node $i$ at iteration $k$ is a function of its neighbors' parameters $\{x_j^k\}_{j \in \mathcal{N}_i}$. Instead of broadcasting the true parameters $\{x_j^k\}$, some
nodes (referred to as Byzantines) can deviate from the prescribed strategies.    
More specifically, we assume that the Byzantine node $j$ falsifies its data at ADMM iteration $k$ as follows:
\begin{eqnarray*}
x_j^k &=& x_j^k + \delta_j^x
\end{eqnarray*}
where $\delta_j^x \sim \mathcal{N}(\mu_x,\,\sigma_x^{2})$. The strength of the attack is characterized by $(\mu_x,\,\sigma_x^{2})$.

\subsubsection{Performance Analysis of CA-LOD with Byzantines}
In this section, we study the susceptibility of CA-LOD in the presence of Byzantine attacks. 
We assume that there is only $1$ Byzantine in the network which is chosen randomly.  

In Fig.~\ref{admm_conv}, we plot the convergence of the ADMM algorithm with updates as given in~\eqref{admm-updates}. We assume the Byzantine's parameters to be
$\mu_x = 1.5$ and $\sigma_x^2 = 0.1$. It can be seen that the Byzantine attack can severely
degrade the convergence performance. More specifically, it can be seen from Fig.~\ref{admm_conv} that a single Byzantine can make the rest of the network converge to a state value which is significantly different from the global LOD statistic. 

Next, in Fig.~\ref{roc_b}, we plot the steady state ROC for different values of attack strength $\mu_x$ keeping $\sigma_x^2 $ fixed to $0.1$. Observe that, as the attack strength increases, the detection performance degrades severely and an adversary can make the steady state statistic (or data) non-informative. In other words, the optimal detection scheme at each node performs no better than a coin flip detector. 


%

\subsection{Robust Collaborative Autonomous Detection using Byzantine-Resilient ADMM}
Our approach draws inspiration from robust statistic for anomaly detection to make ADMM resilient to Byzantine attacks.
More specifically, we propose the following robust ADMM algorithm to tolerate at most $p$ Byzantines

\begin{small}
\begin{eqnarray}
&& x_i^{k+1} = \dfrac{1}{1+2\rho |\mathcal{N}_i|}\left(\rho |\mathcal{N}_i|x_i^{k} +\rho \Gamma_p(\{x_j^{k}\}_{j\in \mathcal{N}_i})-\alpha_i^k+f(z_i)\right),\nonumber \\
&& \alpha_i^{k+1} = \alpha_i^k +\rho \left(|\mathcal{N}_i|x_i^{k+1}-\Gamma_p(\{x_j^{k+1}\}_{j\in \mathcal{N}_i})\right)\label{radmm-updates}
\end{eqnarray}
\end{small}

\noindent where the sum over neighbors' data in~\eqref{admm-updates} has been replaced by a robust function $\Gamma_p(\{x_j^{k}\}_{j\in \mathcal{N}_i})$ which operates as follows: 

\noindent \textbf{Operation of} $\Gamma_p(.)$: \textit{First, sort the elements in $\mathcal{S}=\{x_j^{k}\}_{j\in \mathcal{N}_i}$ in a non-decreasing order (breaking ties arbitrarily), and replace the smallest $p$ values
and the largest $p$ values with mean of remaining $(|\mathcal{N}_i|-2p)$ values.\footnote{We assume that $|\mathcal{N}_i|>2p,\; \forall i$.} Next, return the sum of the elements in the new set.} 
%
%

Next, we analyze the performance of the proposed Byzantine-resilient autonomous detection scheme in the presence of Byzantine attacks. We assume $p=1$.

\subsubsection{Robustness Analysis}
%
%

In Fig.~\ref{radmm_conv}, we plot the convergence of the proposed R-ADMM algorithm with updates as given in~\eqref{radmm-updates}. We assume the Byzantine's parameters to be $\mu_x = 1.5$ and $\sigma_x^2 = 0.1$. It can be seen that, as opposed to Fig.~\ref{admm_conv}, the state values of the honest nodes converge close to the global LOD statistic despite the presence of Byzantine attack.

Next, in Fig.~\ref{roc_comp}, we compare the steady state ROC for CA-LOD of vanilla ADMM based approach with the R-ADMM based approach. We assume attack parameters to be $\mu_x = 2.5$ and $\sigma_x^2=0.1$. It can be seen that the R-ADMM based  Byzantine-resilient CA-LOD approach performs significantly better compare to the vanilla ADMM based approach, which breaks down in the the presence of the Byzantine attack.

\subsubsection{Scaling Analysis}
In Fig.~\ref{scaling}, we plot the convergence behavior of R-ADMM based CA-LOD as network grows larger. We consider a practical scenario where we fix the number of nodes (or neighbors) each node can talk to to be $10$. 
We plot relative convergence rates defined as $T^*/N$ where $T^*$ is the number of iterations needed to reach within $95$\% of the global LOD statistic. Note that, the convergence rate $T^*$ increases as number of nodes $N$ increases in the network, however, the relative convergence rate decreases. This implies that the proposed approach retains the excellent scaling properties of ADMM and is amenable for large scale networks.

In Fig.~\ref{overhead}, we compare the overhead caused by the R-ADMM based CA-LOD scheme. We consider the case where there is no Byzantine in the network and compare the performance of ADMM based CA-LOD and R-ADMM based CA-LOD in terms of relative convergence rate. It can be seen that the overhead caused by the R-ADMM based CA-LOD scheme is very small. In practice, this overhead is dominated by the sorting step in R-ADMM algorithm and is a constant for a bounded neighborhood. 
\vspace{-0.1in}
\section{Conclusion and Future Work}
In this paper, we proposed a decentralized locally optimum detection scheme for radioactive source detection. We also devised a robust version of the ADMM algorithm for Byzantine-resilient detection and demonstrated its robustness to data falsification attacks. 
There are still many interesting questions that remain to be explored in the future work such as analysis and extension of the problem with more realistic signal and communications models and collaborative Byzantine attacks. 

\bibliographystyle{IEEEtran}
\bibliography{references}
\end{document}